\documentclass[11pt,a4paper]{article}

\usepackage[T1]{fontenc}
\usepackage{lmodern}
\usepackage[margin=27mm]{geometry}
\usepackage{amsmath,amssymb,amsthm}
\usepackage{microtype}
\usepackage{xcolor}
\usepackage[colorlinks=true,
            linkcolor=blue!50!black,
            citecolor=blue!50!black,
            urlcolor=blue!50!black,
            breaklinks=true]{hyperref}

\newcommand{\doilink}[1]{\href{https://doi.org/#1}{doi:\nolinkurl{#1}}}

\newtheorem{theorem}{Theorem}
\newtheorem{lemma}[theorem]{Lemma}

\theoremstyle{definition}

\theoremstyle{remark}

\title{Propositional formula minimization is \(\Sigma_2^p\)-complete \\ under many-one reductions}
\author{Reijo Jaakkola}
\date{25.10.2026}

\begin{document}
\maketitle

\begin{abstract}
The Minimum Equivalent Expression problem asks, given a
propositional formula \(\phi\) and an integer \(k\), whether
\(\phi\) is equivalent to a formula with at most \(k\)
occurrences of propositional variables. While the problem
is clearly in \(\Sigma_2^p\), its completeness under
polynomial-time many-one reductions has remained open
since at least 1979. Buchfuhrer and Umans showed in 2008
that it is \(\Sigma_2^p\)-complete under polynomial-time
Turing reductions. In this note, we present a proof
discovered by GPT-Sol-6.1 which establishes that
MEE is \(\Sigma_2^p\)-complete also under polynomial-time
many-one reductions.
\end{abstract}

\section{Introduction}

The Minimum Equivalent Expression (MEE) problem is a fundamental computational problem in logic. Given a propositional formula
\(\phi\) and an integer \(k\), the problem asks whether there exists a
formula equivalent to \(\phi\) containing at most \(k\) occurrences of
propositional variables. The problem belongs to \(\Sigma_2^p\), since
one can existentially guess a formula of the prescribed size and then
verify its equivalence to \(\phi\) using a coNP oracle.

Determining the exact complexity of MEE has a long history. The problem
was already listed as open by Garey and Johnson in
1979~\cite{GareyJohnson1979}. In 1997 Hemaspaandra and Wechsung established hardness under polynomial-time
many-one reductions for the class of problems solvable using parallel
access to an NP oracle~\cite{HemaspaandraWechsung2002}.
Subsequently, Buchfuhrer and Umans proved in 2008 that MEE is
\(\Sigma_2^p\)-complete under polynomial-time Turing
reductions~\cite{BuchfuhrerUmans2008,BuchfuhrerUmans2011}.
Their result also established the corresponding completeness for
bounded-depth formulas of every fixed depth at least three. The case of
depth two, which amounts to minimizing DNF formulas, was proved earlier to be
\(\Sigma_2^p\)-complete under many-one reductions in Umans~\cite{Umans2001}.

In this note, we resolve this question by proving that MEE is also
\(\Sigma_2^p\)-complete under polynomial-time many-one reductions. At a high level, our strategy follows that of Buchfuhrer and
Umans~\cite{BuchfuhrerUmans2011}. Like them, we reduce from Succinct
Set Cover, a \(\Sigma_2^p\)-complete problem introduced by
Umans~\cite{Umans1999}. In their construction, the size threshold of
the final MEE instance depends on the minimum size of an equivalent
formula for an auxiliary function. They have no explicit expression for
this quantity, so they determine it with oracle calls to MEE, and this
is why their reduction is a Turing reduction. It turns out that by using technical tricks we can modify their reduction so that the threshold can be written down explicitly, removing the need for oracle calls.

Our motivation for studying MEE comes from explainable AI (XAI), and in
particular from our earlier work on logic-based explanations~\cite{JaakkolaEtAl2025}.
There we distinguish between \emph{local} explanations, which explain the
behavior of a model on a single input, and \emph{global} explanations, which
explain its behavior on all inputs. If a classifier is represented by a
propositional formula \(\phi\) and the explanation is required to be completely
faithful, a global explanation of \(\phi\) is a formula \(\psi\) of minimal size
that is equivalent to \(\phi\).\footnote{At first sight, it may seem odd to define an optimal explanation of a
propositional formula as merely a smallest equivalent formula: how can explaining
be reduced to optimizing size? The requirement of faithfulness does the real work
here. A faithful explanation must describe fully how the formula behaves on all
inputs, and so it can itself be interpreted as a formula computing the same function. The only thing
left to improve is how easy the explanation is to understand, and one natural way
to measure this is its size.} When the size of a formula is measured by the number of occurrences of propositional variables, the problem of finding an optimal global
explanation is precisely MEE.

\medskip

\noindent\textbf{Use of AI.}
The proof presented in this note was discovered by GPT-Sol-6.1.
Opus 5.5 was subsequently used to rewrite the proof
to improve its clarity and readability. GPT-6 and Sonnet 5.5 assisted
with the writing and editing of the manuscript.

\section{Preliminaries}

Given a set \(\Phi\) of propositional variables, we define \(\mathrm{PL}[\Phi]\) to be the set of propositional (or Boolean) formulas over \(\Phi\) generated by the following grammar.
\[\phi ::= \top \mid \bot \mid p \mid \neg \phi \mid \phi \land \phi \mid \phi \lor \phi\]
Here \(p \in \Phi\).

A \(\Phi\)-\textbf{assignment} is a function \(v:\Phi \to \{0,1\}\). Given an assignment \(v\), a variable \(p\in\mathrm{dom}(v)\), and \(b\in\{0,1\}\), we write \(v[p/b]\) for the assignment
that agrees with \(v\) on all variables other than \(p\)
and satisfies \(v[p/b](p)=b\). The \textbf{truth value} of \(\phi\) under assignment \(v\) is defined as usual and we denote it by \(v(\phi)\). Given two formulas \(\phi,\psi \in \mathrm{PL}[\Phi]\), we write \(\phi \models \psi\) if \(v(\phi) = 1\) implies \(v(\psi) = 1\) for every \(\Phi\)-assignment \(v\). Furthermore, we write \(\phi \equiv \psi\) if \(\phi \models \psi\) and \(\psi \models \phi\).

Let \(\Phi\) be a set of propositional variables and \(w:\Phi \to \mathbb{Z}_+\) a function. Think of \(w\) as assigning a weight to each proposition symbol. Given \(\phi \in \mathrm{PL}[\Phi]\), we define the \(w\)-\textbf{size} \(\mathrm{size}_w(\phi)\) of \(\phi\) recursively as follows.
\begin{itemize}
    \item \(\mathrm{size}_w(\top) = \mathrm{size}_w(\bot) = 0\).
    \item \(\mathrm{size}_w(p) = w(p)\), for \(p \in \Phi\).
    \item \(\mathrm{size}_w(\neg \phi) = \mathrm{size}_w(\phi)\).
    \item \(\mathrm{size}_w(\phi \land \psi) = \mathrm{size}_w(\phi) + \mathrm{size}_w(\psi)\).
    \item \(\mathrm{size}_w(\phi \lor \psi) = \mathrm{size}_w(\phi) + \mathrm{size}_w(\psi)\).
\end{itemize}
With this size convention, we can assume without loss of generality that our formulas are in negation normal form. That is, negation occurs only in front of proposition variables.

We also define
\[L_w(\phi) := \min\{\mathrm{size}_w(\psi) \mid \phi \equiv \psi\}.\]
When \(w\) is the constant function which maps every proposition variable to one, we drop \(w\) and write \(\mathrm{size}(\phi)\) and \(L(\phi)\) instead of \(\mathrm{size}_w(\phi)\) and \(L_w(\phi)\) respectively. We are interested in the following computational problem.

\begin{center}
\fbox{%
\begin{minipage}{0.92\linewidth}
\textsc{Minimum Equivalent Expression (MEE)}

\medskip
\textbf{Input:} A propositional formula $\phi$ and an integer $k \geq 0$.

\smallskip
\textbf{Question:} Is $L(\phi) \leq k$?
\end{minipage}%
}
\end{center}

\noindent For $w:\Phi\to\mathbb{Z}_+$ and $\phi\in\mathrm{PL}[\Phi]$,
write $\phi^w$ for the formula obtained by replacing each variable $p$
with a conjunction of $w(p)$ fresh variables. We use the same block
for every occurrence of $p$ and disjoint blocks for distinct variables. The following lemma is essentially proved in~\cite{BuchfuhrerUmans2011}.

\begin{lemma}\label{lem:leaf-weighting}
Let $w:\Phi\to\mathbb{Z}_+$ and $\phi\in\mathrm{PL}[\Phi]$. Then
\[
L(\phi^w)=L_w(\phi).
\]
\end{lemma}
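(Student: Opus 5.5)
We prove the two inequalities separately. For \(L(\phi^w)\le L_w(\phi)\), take \(\psi\equiv\phi\) with \(\mathrm{size}_w(\psi)=L_w(\phi)\) and form \(\psi^w\) by the same block substitution used for \(\phi^w\). Substitution preserves equivalence, so \(\psi^w\equiv\phi^w\). Each occurrence of \(p\) in \(\psi\) becomes a conjunction of \(w(p)\) leaves, so \(\mathrm{size}(\psi^w)=\mathrm{size}_w(\psi)\).

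For the reverse inequality \(L_w(\phi)\le L(\phi^w)\), let \(\chi\) be a negation-normal-form formula equivalent to \(\phi^w\) with \(\mathrm{size}(\chi)=L(\phi^w)\). Write \(B_p=\{p_1,\dots,p_{w(p)}\}\) for the block of \(p\). We want a formula \(\psi\equiv\phi\) with \(\mathrm{size}_w(\psi)\le\mathrm{size}(\chi)\), and we will obtain it by a restriction argument. For each \(p\) with at least one block variable occurring in \(\chi\), choose an index \(i_p\) such that \(p_{i_p}\) has the fewest occurrences in \(\chi\) among \(B_p\); then \(p_{i_p}\) has at most \(\mathrm{occ}_\chi(B_p)/w(p)\) occurrences. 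Now substitute \(\top\) for every other variable of \(B_p\) and rename \(p_{i_p}\) to \(p\). Variables of blocks that do not occur in \(\chi\) are handled the same way, using any index. Call the resulting formula \(\psi\).

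Under this substitution the conjunction \(\bigwedge B_p\) collapses to \(p\). Since \(\chi\equiv\phi^w\) and substitution preserves equivalence, \(\psi\equiv\phi^w[\ldots]=\phi\). For the size, each remaining occurrence of \(p\) contributes \(w(p)\), so
\[
\mathrm{size}_w(\psi)=\sum_p w(p)\cdot\mathrm{occ}_\chi(p_{i_p})\le\sum_p \mathrm{occ}_\chi(B_p)=\mathrm{size}(\chi).
\]
Constants introduced by the substitution have size \(0\) and can be simplified away, which never increases size.

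The main point to verify is the averaging step: the minimum over the block is at most the average. This holds trivially, so there is no real obstacle. The only care needed is with blocks that do not occur in \(\chi\) (then \(p\) does not occur in \(\psi\), which is consistent) and with confirming that substitution of constants preserves equivalence. Combining the two inequalities gives \(L(\phi^w)=L_w(\phi)\).
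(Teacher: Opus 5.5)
Your proof is correct. The paper gives no proof of this lemma and only cites Buchfuhrer and Umans; your argument is the standard one (block substitution gives \(L(\phi^w)\le L_w(\phi)\), and keeping the least frequent variable of each block while setting the rest to \(\top\) gives the reverse inequality because the minimum is at most the average), and this is essentially the argument the paper is deferring to.
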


\section{Main result}

In this section we prove the following.

\begin{theorem}\label{thm:main}
\(\mathrm{MEE}\) is \(\Sigma_2^p\)-complete under polynomial-time many-one reductions.
\end{theorem}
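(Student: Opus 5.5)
Membership in \(\Sigma_2^p\) is immediate: existentially guess a formula \(\psi\) with at most \(k\) variable occurrences (we may assume \(k\le \mathrm{size}(\phi)\), so \(\psi\) has polynomial size), and then universally check that \(v(\phi)=v(\psi)\) for every assignment \(v\). The substance of the proof is therefore hardness. We follow Buchfuhrer and Umans~\cite{BuchfuhrerUmans2011} and reduce from Succinct Set Cover (SSC). Recall that an instance of SSC consists of a DNF (or circuit) \(D(x,y)\) together with an integer \(k\). Its sets are \(S_y=\{x : D(x,y)=0\}\) (or the analogous variant), and the question is whether \(k\) of these sets, together with at most \(k\) singletons, cover the whole universe. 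We will reduce to the weighted problem \(L_w(\cdot)\le K\) and then remove the weights by Lemma~\ref{lem:leaf-weighting}. Since \(\phi^w\) has size polynomial in \(\sum_p w(p)\cdot(\text{occurrences})\), this step is sound provided all weights are polynomially bounded.

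The construction proceeds as follows. From \(D\) we build a formula \(\phi\) over the variables \(x\), the selector variables \(y\), and a small gadget block \(z\), in the style of Buchfuhrer--Umans. The intent is that an optimal equivalent formula for \(\phi\) is forced to look like a disjunction \(\bigvee_{i\le k} (\text{gadget}_i \wedge \text{``}y=y_i\text{''})\) over chosen sets. We assign heavy weights \(W\) to the gadget variables and weight \(1\) to the remaining variables. This way, the cost of each set used is dominated by a fixed multiple of \(W\), while the cost of the covering part is lower order. The key property we want is
\[
L_w(\phi)=k\cdot c\,W + r,
\]
where \(c\) is an explicit constant and \(r\) is a bounded remainder, and where we can prove explicit matching upper and lower bounds on \(r\). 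Completeness follows by writing down the formula for a given cover and counting its \(w\)-size. Soundness takes more work: given a formula of small \(w\)-size, we use restrictions (setting all variables except one set-block to constants and applying the minimality of the gadget) to argue that at most \(k\) heavy blocks occur, each of which corresponds to a set \(S_{y_i}\). Any leftover lightweight part can then cover only few points, which we charge to singletons.

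The main obstacle, and the reason Buchfuhrer and Umans needed a Turing reduction, is that the threshold involves \(L(g)\) for an auxiliary gadget function \(g\), and that quantity has no closed form. We plan to replace their gadget by one whose formula complexity we can pin down exactly. The first thing to try is a gadget built from a parity or threshold function on a small number of variables. Another candidate is a conjunction or disjunction of fresh weighted variables, whose minimum size equals the number of essential variables, because every essential variable must occur at least once. We then attach it using weights, so that its contribution to the threshold is exactly \(\sum w\) over its essential variables. The essential-variable lower bound (\(L_w(f)\ge\sum_{p\text{ essential}} w(p)\)) is trivially exact. Hence, if we design the reduction so that every lower bound we need comes from counting essential variables under restrictions, we obtain the threshold \(K\) explicitly. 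The delicate point is to make sure that sharing of subformulas between the blocks cannot beat this count. For that we rely on choosing \(W\) larger than the total light weight, so that any saving in the heavy part costs more than the light part can ever recover.
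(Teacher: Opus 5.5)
Your membership argument is fine, but the hardness half is a plan rather than a proof. You never fix the formula $\phi$, the weights, or the threshold $K$. The two properties everything rests on are stated as goals, not proved: that a cheap formula is ``forced to look like'' a disjunction of at most $k$ gadget blocks, and that $L_w(\phi)=k\cdot cW+r$ with explicit bounds on $r$. Your soundness sketch (restrict to one set-block and invoke minimality of a gadget) does not address the real difficulty: an arbitrary formula equivalent to $\phi$ need not have any block structure. What you are missing is a way to \emph{force} that structure. The paper takes $\phi=\delta\lor(r\land(\neg q\lor\eta))$ with $\eta=\bigvee_i\neg p_i$, $\tau=\bigvee_i(p_i\land\theta_i)$ and $\delta=\eta\land\tau$, and gives $q,r$ weight $M>K/3$. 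Setting all $p_i=1$ reduces $\phi$ to $r\land\neg q$, so any $\psi\equiv\phi$ of weight at most $K$ contains exactly one occurrence of $q$ (negated) and exactly one of $r$. Collect the siblings $\gamma_i$ at $\lor$-gates and $\sigma_j$ at $\land$-gates along the path from $r$ to the root. Since $\psi$ must depend on $r$ whenever $\delta=0$ and $q=0$, one gets $\gamma_i\models\delta$. Hence $\psi\equiv\alpha\lor(r\land\beta)$ at the same weight, with $\alpha\equiv\delta$ and $\delta\lor\beta\equiv\eta\lor\neg q$. The chosen sets are then not $k$ separate disjuncts but the set $J$ of selector variables $p_i$ occurring in $\beta$.

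Second, the tool you propose for making the threshold explicit, counting essential variables, is too weak for a selector gadget of this kind. The bound $L_w(f)\ge\sum_{p\text{ essential}}w(p)$ is valid but not exact in general. For $\delta$ it yields only $nW$ from the heavy variables, while the obvious formula for $\delta$ costs $2nW+b$, so the uncertainty would be of order $nW$ rather than below one heavy unit. The paper instead observes that $\delta$ is neither monotone nor antimonotone in any $p_i$ (this uses that the $\theta_i$ are satisfiable and $n\ge 2$). So every negation-normal-form formula for $\delta$ contains both $p_i$ and $\neg p_i$. This gives a lower bound of $2nW$, matching the upper bound up to the light part $b<W$. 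That is exactly why $K=2M+(2n+k)W+b$ can be written down: one never needs $L_w(\phi)$ exactly, only an interval of width below $W$. From $2M+(2n+|J|)W\le K$ one gets $|J|\le k$. A flipping argument on the $p_i$ with $i\notin J$, using $\delta\lor\beta\equiv\eta\lor\neg q$ and $|J|<n$, then shows $\bigvee_{j\in J}\theta_j\equiv\top$. Your instinct to take $W$ larger than the total light weight is the right one. But without a concrete construction, the single-occurrence decomposition, and the two-polarity count, the reduction is not established.
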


\noindent The upper bound is trivial. For the lower bound we use the following problem.

\begin{center}
\fbox{%
\begin{minipage}{0.92\linewidth}
\textsc{Succinct Set Cover (SSC)}

\medskip
\textbf{Input:} Satisfiable 3-DNF formulas
$\theta_1,\ldots,\theta_n \in \mathrm{PL}[\Phi]$
such that
\[\bigvee_{i=1}^n \theta_i \equiv\top,\]
and an integer $1 \leq k < n$.

\smallskip
\textbf{Question:} Is there a set $I\subseteq\{1,\ldots,n\}$
with $|I|\leq k$ such that
\[
\bigvee_{i\in I} \theta_i \equiv\top?
\]
\end{minipage}%
}
\end{center}

\noindent SSC is known to be $\Sigma_2^p$-complete under polynomial-time
many-one reductions~\cite{Umans1999}.

\subsection{The reduction}

Let $(\theta_1,\ldots,\theta_n,k)$ be an instance of SSC.
Introduce fresh propositional variables $p_1,\ldots,p_n,q,r$ and put
\[
\Psi:=\Phi\cup\{p_1,\ldots,p_n,q,r\}.
\]
Intuitively, the variables $p_1,\ldots,p_n$ encode a subset of
$\{1,\ldots,n\}$. The role of the variables \(q,r\) will become clear later.

We define three auxiliary formulas in $\mathrm{PL}[\Psi]$.
\[
\eta := \bigvee_{i=1}^n \neg p_i,
\qquad
\tau := \bigvee_{i=1}^n (p_i\land\theta_i),
\qquad
\delta := \eta\land \tau.
\]
Finally, we define
\[\phi := \delta \lor (r \land (\neg q \lor \eta)).\]
The high-level idea now is the following. Suppose that \(\psi\) is equivalent to \(\phi\). Since every formula equivalent to $\phi$ must contain both $q$ and $r$, we can assign them sufficiently large weights so that \(\psi\) contains exactly one occurrence of each if it is within the prescribed weight bound. This in turn gives us control over the structure of \(\psi\). In particular, we can show that \(\psi\) can be rewritten, without increasing its weighted size, in the form
\[
\alpha\lor(r\land \beta),
\text{ where }
\alpha\equiv \delta
\text{ and }
\delta\lor \beta\equiv \neg q \lor \eta.
\]
Since $\alpha$ computes $\delta$, it must contain every variable $p_i$
both positively and negatively. We choose the weights of the variables \(p_i\) so
that $\beta$ can mention at most $k$
distinct variables from the set \(\{p_1,\dots,p_n\}\). We then show that the corresponding \(3\)-DNF formulas \(\theta_i\) form a cover.

Next we will define the weights. Let
\[b := \sum_{i=1}^n \mathrm{size}(\theta_i)\]
and set
\[
W:=b+1,\qquad M:=(3n+1)W.
\]
Define $w:\Psi\to\mathbb{Z}_+$ by
\[
w(s):=
\begin{cases}
1, & s\in\Phi,\\
W, & s\in\{p_1,\ldots,p_n\},\\
M, & s\in\{q,r\}.
\end{cases}
\]
Finally, put
\begin{equation}\label{eq:threshold}
K:=2M+(2n+k)W+b.
\end{equation}
The output of the reduction is the MEE instance $(\phi^w,K)$, which can be computed in polynomial time since the weights are polynomial in the size of the SSC instance. By Lemma~\ref{lem:leaf-weighting}, $L(\phi^w)=L_w(\phi)$.
Thus, to prove correctness, it suffices to show that the SSC instance is positive exactly when \(L_w(\phi) \leq K\).

\subsection{Left to right direction}

Suppose that there is a subset \(I \subseteq \{1,\dots,n\}\) with \(|I| \leq k\) such that
\[\bigvee_{i \in I} \theta_i \equiv \top.\]
Consider the formula
\[
\psi := \delta\lor\left(r\land
\left(\neg q \lor \bigvee_{i\in I}\neg p_i\right)\right).
\]
Notice that the only difference to \(\phi\) is that in the second disjunct we have \(\bigvee_{i \in I} \neg p_i\) instead of \(\eta\). Now
\begin{align*}
\mathrm{size}_w(\psi)
&= \mathrm{size}_w(\delta) + M + M + W|I| \\
&= nW + b + nW + M + M + |I|W \\
&\leq 2M +  (2n + k)W + b = K
\end{align*}
and so it suffices to show that \(\psi\) is equivalent to \(\phi\). Since
\[\bigvee_{i\in I}\neg p_i\models \eta,\]
we get \(\psi \models \phi\). For the converse direction, let $v$ be an assignment with $v(\phi)=1$. If $v(\delta)=1$, then $v(\psi)=1$. Otherwise, $v(\delta)=0$, so $v(r)=1$.
If $v(q)=0$, then $v(\psi)=1$. Therefore we may assume that $v(q)=1$, which in turn forces $v(\eta)=1$.
Since $v(\delta)=v(\eta \land \tau)=0$, we obtain $v(\tau)=0$, and hence
\begin{equation}\label{eq:1}
v(p_i\land\theta_i)=0
\qquad\text{for every }i\in\{1,\ldots,n\}.
\end{equation}
As $\bigvee_{i\in I}\theta_i\equiv\top$, there is some $i\in I$
with $v(\theta_i)=1$. By~\eqref{eq:1}, we then have $v(p_i)=0$.
Together with $v(r)=1$, this gives $v(\psi)=1$.
Thus $\phi\models\psi$, and consequently $\phi\equiv\psi$.

\subsection{Right to left direction}

Suppose that \(L_w(\phi) \leq K\). Thus there exists \(\psi\) such that \(\mathrm{size}_w(\psi) \leq K\) and \(\phi \equiv \psi\).

\medskip

\noindent\textbf{Claim.}\quad
\emph{The formula $\psi$ can be rewritten, without increasing its
weighted size, as}
\[
\alpha\lor(r\land \beta),
\]
\emph{where neither \(\alpha\) nor \(\beta\) contains
$r$ and furthermore}
\[
\alpha\equiv \delta
\qquad\text{and}\qquad
\delta\lor \beta\equiv \eta\lor\neg q.
\]

\begin{proof}
Recall that we may assume that $\psi$ is in negation normal form. Setting
$p_1=\cdots=p_n=1$ reduces $\phi$ to $r\land\neg q$, so $q$ must occur negatively in \(\psi\) and $r$ must occur positively in $\psi$. Since
$w(q)=w(r)>K/3$, both \(\neg q\) and \(r\) occur exactly once in \(\psi\).

Consider the syntax tree of \(\psi\). There exists a unique path from the unique occurrence of $r$ to the root
of $\psi$. Let $\gamma_1,\ldots,\gamma_\ell$ be the sibling subformulas
at the $\lor$-gates on this path, and let
$\sigma_1,\ldots,\sigma_m$ be the sibling subformulas at the
$\land$-gates. None of these subformulas contains $r$.

We first show that \(\neg\delta\models\neg\gamma_i\) for every \(i\).
Consider any assignment \(v\) such that \(v(\delta)=0\). Our goal is to
show that \(v(\gamma_i)=0\). Since \(q\) occurs only negatively in
\(\gamma_i\) and does not occur in \(\delta\), we may assume w.l.o.g.
that \(v(q)=0\). For any such assignment, we have
\(v[r/1](\psi)=1\) and \(v[r/0](\psi)=0\).
Suppose now that \(v(\gamma_i)=1\). Then the parent \(\lor\)-gate
of \(\gamma_i\) is true regardless of the value of \(r\).
Since \(r\) occurs nowhere else in \(\psi\), changing its value
cannot affect the truth value of \(\psi\). Consequently
\(v[r/0](\psi)=v[r/1](\psi)=1\), which is a contradiction.
Hence \(\neg\delta\models\neg\gamma_i\), which implies
\(\gamma_i\models\delta\).

Now define
\[
\alpha:=\bigvee_{i=1}^{\ell}\gamma_i,
\qquad
\beta:=\bigwedge_{j=1}^{m}\sigma_j,
\]
where an empty disjunction is $\bot$ and an empty conjunction
is $\top$. We claim that
\[
\psi\equiv \alpha\lor(r\land \beta).
\]
Indeed, if $\alpha$ is true, then some $\gamma_i$ is true, so $\delta$ and
therefore $\psi$ are true. If $\alpha$ is false, every sibling
at a $\lor$-gate on the path is false. These gates therefore
pass through the value of their child on the path, while
the $\land$-gates conjoin the sibling subformulas $\sigma_j$.
Thus, in this case, $\psi$ has value $r\land \beta$.

The sibling subformulas are disjoint and contain every leaf
of $\psi$ except its unique occurrence of $r$. Consequently,
\[
\mathrm{size}_w\bigl(\alpha\lor(r\land \beta)\bigr)
=
\mathrm{size}_w(\psi).
\]
Finally, setting $r=0$ gives $\alpha\equiv \delta$, while setting $r=1$
gives
\[
\delta\lor \beta
\equiv \alpha\lor \beta
\equiv \delta\lor \eta\lor\neg q
\equiv \eta\lor\neg q,
\]
where the last equivalence uses $\delta\models \eta$.
\end{proof}

Thus w.l.o.g. we can assume that \(\psi\) is of the form described in the above claim. Let
\[J := \{i\in\{1,\dots,n\} \mid p_i \text{ occurs in } \beta\}.\]
Since \(\alpha\equiv\delta\), the satisfiability of the formulas
\(\theta_i\) and the fact that \(n\geq 2\) imply that \(\alpha\) contains every variable \(p_i\) both positively and negatively. Hence \(\alpha\) contains at least \(2n\) occurrences of the variables \(p_1,\dots,p_n\), and \(\beta\) contains at least \(|J|\) of them. As \(\psi\) also contains one occurrence of \(q\) and one occurrence of \(r\), we obtain
\[2M+(2n+|J|)W \leq \mathrm{size}_w(\psi) \leq K = 2M+(2n+k)W+b.\]
Therefore \(|J|W\leq kW+b<(k+1)W\), and so \(|J|\leq k\).

It remains to show that
\[\bigvee_{j\in J}\theta_j\equiv\top.\]
Aiming for a contradiction, suppose that this is not the case. Then there is an assignment \(v\) such that \(v(\theta_j)=0\) for every \(j\in J\). Since the variables \(p_1,\dots,p_n,q\) do not occur in the formulas \(\theta_j\), we may assume that \(v(p_j)=1\) for every \(j\in J\), \(v(p_i)=0\) for every \(i\notin J\), and \(v(q)=1\). As \(|J|\leq k<n\), we have \(v(\eta)=1\). Moreover, \(v(\tau)=0\) by the choice of \(v\), and hence \(v(\delta)=0\). Since \(v(\neg q\lor\eta)=1\) and \(\delta\lor\beta\equiv\eta\lor\neg q\), we get \(v(\beta)=1\). Now let \(v'\) be obtained from \(v\) by setting \(p_i\) to \(1\) for every \(i\notin J\). These variables do not occur in \(\beta\), so \(v'(\beta)=v(\beta)=1\). On the other hand, \(v'(\neg q\lor\eta)=0\), which contradicts \(\delta\lor\beta\equiv\eta\lor\neg q\). Thus \(J\) is a cover of size at most \(k\).

\end{document}